
\documentclass[conference,letterpaper]{IEEEtran}
\IEEEoverridecommandlockouts
\usepackage{cite}
\usepackage{amsmath,amssymb,amsfonts}

\usepackage{graphicx}
\usepackage{textcomp}
\usepackage{xcolor}
\usepackage{fancyhdr}

\usepackage{multirow}
\usepackage{diagbox} 
\usepackage{booktabs} 

\usepackage{algpseudocode}
\usepackage{algorithm}
\usepackage{amsthm}

\newtheorem{theorem}{Theorem}

\newtheorem{lemma}{Lemma}

\newcommand{\etal}{\textit{et al}. }

\usepackage{enumitem}

\def\BibTeX{{\rm B\kern-.05em{\sc i\kern-.025em b}\kern-.08em
    T\kern-.1667em\lower.7ex\hbox{E}\kern-.125emX}}
\begin{document}

\title{Eunomia: A Permissionless Parallel Chain Protocol Based on Logical Clock}

\author{\IEEEauthorblockN{Jianyu Niu}
\IEEEauthorblockA{School of Engineering, University of British Columbia (Okanagan Campus), Kelowna, Canada\\
jianyu.niu@ubc.ca}
}

\maketitle
\cfoot{}
\pagestyle{fancy}
\rfoot{\thepage}

\begin{abstract}
The emerging parallel chain protocols represent a breakthrough to address the scalability of blockchain. By composing multiple parallel chain instances, the whole systems' throughput can approach the network capacity. How to coordinate different chains' blocks and to construct them in a global ordering is critical to the performance of parallel chain protocol. However, the existing solutions use either the global synchronization clock with the single-chain bottleneck or pre-defined ordering sequences with distortion of blocks' causality to order blocks. Besides, the prior ordering methods rely on that honest participants faithfully follow the ordering protocol but remain silent for any denial of ordering (DoR) attack. 

On the other hand, the conflicting transactions included in the global block sequence make Simple Payment Verification (SPV) difficult to tell confirmed transactions. Clients usually need to store a full record of transactions to distinguish the conflictions and tell whether transactions are confirmed. However, the requirement for a full record greatly hinders blockchains' application, especially for mobile scenarios. 

In this technical report, we propose Eunomia, which leverages logical clock and fine-grained UTXO sharding to realize a simple, efficient, secure, and permissionless parallel chain protocol. 
By observing the characteristics of the parallel chain, we find the blocks ordering issue in the parallel chain has many similarities with the event ordering in the distributed system. Eunomia thus adopts the ``virtual'' logical clock, which is optimized to have the minimum protocol overhead and runs in a distributed way. In addition, Eunomia combines the mining incentive with block ordering, providing incentive compatibility against the DoR attack. What's more, the fine-grained UTXO sharding does well solve the conflicting transactions in the parallel chain and is shown to be SPV-friendly. 
\end{abstract}

\begin{IEEEkeywords}
Blockchain, Parallel Chain, Bitcoin, Proof-of-Work, UTXO, Logical Clock
\end{IEEEkeywords}

\section{Introduction} \label{sec:intro}
\noindent In 2008, Nakamoto invented the seminal \emph{blockchain} protocol which uses the \emph{Nakamoto Consensus (NC)} to realize a public, immutable and distributed ledger, Bitcoin\cite{nakamoto2012bitcoin}. In NC, participates, called miners are allowed to generate a new block --- a collection of transactions by solving computational puzzles, known as \emph{Proof-of-Work (PoW)}, and can reach an agreement of a sequence of blocks by following the \emph{longest chain rule} (\emph{LCR}). Although remarkably simple, NC enjoys very brilliant properties: it does not require participants' identities as a setup --- a fully permissionless setting and can work as long as more than half of the computation power controlled by honest participates. Due to this, more than six hundred digital currencies leverage NC or its components, \emph{e.g.} PoW, to maintain consensus \cite{mapofcoins, Zhang2019CommonMetrics}. 

Unfortunately, NC and its variants suffer from low transaction throughput (\emph{e.g.}, $7$ TPS\footnote{TPS is short for transaction per second.} in Bitcoin and $20$ TPS in Ethereum), resulting from the inherently speed-security trade-off (\emph{i.e.}, block is generated sequentially and the generation rate needs to be relatively small for security constraints\cite{garay2015, Pass2017, garay2017, Kiffer2018, Zhang2019CommonMetrics}). Despite the improvements made by \cite{Zhang2017, bitcoinng, Sompolinsky2016SPECTREAF, Sompolinsky2015ghost}, these NC-based protocols are still designed based on this paradigm and cannot break the security limits to achieve the optimal throughput, \emph{i.e.}, the maximum supported by the underlying communication network. 

\emph{Parallel chain} represents an exciting breakthrough, which first steps out the design paradigm and pushes blockchains' throughput to the optimal. Surprisingly, the parallel chain achieves this with a straightforward idea --- the whole system's throughput can promote by increasing the number of chains. In doing so, there are three main challenges. First, each chain's security will not be influenced as the number of chains increases. This has been well-solved by the $m$-for-$1$ PoW \cite{garay2015} (introduced in Sec.~\ref{subsec:parallelmining}). 

Second, blockchain is essentially a distributed, append-only ledger that outputs a globally ordered transaction sequence.
For example, in NC the main chain, \emph{i.e.}, the longest path in a blocktree structure  \cite{nakamoto2012bitcoin} intrinsically provides a global order of all included blocks, resulting in a sequence of transactions. However, without considering cross-chain hash references (See Sec.~\ref{subsec:parallelmining}) in the parallel chain, participants observe multiple blocktrees, forming a blockforest. Moreover, there does not exist any explicit ordering relationships between mined blocks of different blocktrees because in $m$-for-$1$ PoW blocks randomly extends one of the parallel chains. Thus, how to globally order blocks in a sequence is the key to design a secure and high-performance parallel chain system. 

To meet the challenge, a few elegant designs have been proposed in the recent works \cite{fitzi2018parallel, bagaria2018deconstructing, yu2018ohie}, among which \cite{fitzi2018parallel, bagaria2018deconstructing} use one unique chain, i.e., ``synchronization chain'' to either be referenced as a globally synchronized clock or reference other chains' blocks in sequence, providing the ordering relationship. However, once the single unique chain grows slowly, the whole system's block ordering is delayed, and transaction latency increases. 
In \cite{yu2018ohie}, Yu \etal adopts a simple pre-defined sequence to order blocks and propose weighted blocks to balance chains' length. Specifically, a new attachment block containing fields $(rank, next\_rank)$ is adopted. However, using the pre-defined sequence may ruin the causality of blocks. In other words, a block may be inserted after a future block mined on it into the global block sequences, which provides the possibility for new attack vectors (e.g., a later mined block including the same transactions for winning transaction fees). What's more, the additional attachment blocks increase the protocol's overhead and make the analysis complicated. Their limitations make us believe the ordering issue of the parallel chain has not been fully explored.

Third, in the parallel chain, the blocks concurrently mined by multiple participants in different chains may contain the same transactions more than once, especially those transactions with high fees \cite{nakamoto2012bitcoin}. It is easy to see that the resulted transaction redundancy decreases the whole systems' effective throughput. More importantly, the Byzantine clients can issue multiple conflicting transactions, which contain the same UTXO as input and have different output, into the network. This problem is also known as a double-spending attack in NC. Due to the concurrency of blocks, conflicting transactions may be included in the global ordering block sequence. Thus, the first transaction is confirmed, whereas the rest of conflicting transactions are invalidated. In other words, clients need to have a full record of transactions to tell confirmed transactions. Note that as blocks are created sequentially in NC, only one of the conflicting transactions can be eventually included in the main chain. The differences lead to that light client, i.e., client just store blocks headers, cannot use SPV to verify transactions anymore. Although the prior works, e.g., transaction sharding in \cite{fitzi2018parallel} and colored transaction scheduling in \cite{bagaria2018deconstructing} can solve redundant transactions, there is no solution for conflicting transactions or feasible design for light clients and SPV. All of those motivate our work.



\subsection{Objective and Contributions}
\noindent In this technical report, we design a permissionless parallel chain protocol, namely Eunomia, which addresses the total block ordering and conflicting transaction issues. We try to make Eunomia's design modular and leverage the simple and thoroughly analyzed NC \cite{garay2015, Pass2017, garay2017, Kiffer2018, Zhang2019CommonMetrics} to build each chain instance. It is easy to replace some components, \emph{e.g.} using PoS or Ghost \cite{Sompolinsky2015ghost}, to design a new parallel chain protocol. Eunomia is proved to tolerate up to $1/2$ adversarial computation power and meanwhile, to provide the optimal throughput. Our contributions in this technical report are twofold. 

\begin{enumerate}[label=(\arabic*)]
\item To the best of knowledge, we are the first to use ``logical clock'' in distributed system \cite{Lamport1978, fidge1987timestamps, mattern1988virtual} to solve the global block ordering issues in the parallel chain. Our adoption is based on the following two observations: $i$) each chain can be an analogy to one process in a machine, and the included blocks can be transformed into a metaphor of events \cite{Lamport1978}; $ii$) blocks' contained hash references, \emph{i.e.}, the logical link constructing blocks into chains, can provide evidence for blocks' causality and synchronize chains' logical clocks. Based on the observations, 
we propose the ``virtual logical clock'', which can be used to timestamp blocks and be computed according to the directed graph of connected blocks. 
In our approach, all chains are equal, and chains' synchronization does not rely on any unique chain in a distributed way. In addition, our ordering protocol is shown to be as simple as \cite{yu2018ohie}, but order blocks according to their causality without using any additional block. What's more, our protocol combines mining incentive with blocks' ordering, and so any denial of ordering attack will bring economic loss to the attacker. 

\item In a UTXO-based blockchain system, the objects involved in the record can be divided into three levels: UTXO, transaction, and client (listed from bottom to up). A client can issue multiple transactions, and a transaction can contain multiple unspent UTXO as input and create new UTXOs. In this technical report, we propose a new fine-grained UTXO sharding, which is different from the coarse sharding based on transactions and clients. The UTXO sharding allows us to trace and to control clients' UXTO in a more detailed way, and so do well in solve the conflicting transaction issue in the parallel chain. The new sharding design is shown to be SPV-friendly, which can make the light client as secure as it in NC, and meanwhile do not sacrifice too many performances (e.g., transaction latency and protocol overhead). 
\end{enumerate}

\emph{Report Structure.} Sec.~\ref{sec:back} introduces the background, and Sec.~\ref{sec:model} introduces system model and design goals. Sec.~\ref{sec:sharding} presents the UTXO sharding design, and Sec.~\ref{subsec:order} gives the global ordering protocol. The security analysis is given in Sec.~\ref{sec:analysis}. 
Sec.~\ref{sec:related} provides the related work. 
Finally, Sec.~\ref{sec:conclusion} concludes the report.

\section{Background} \label{sec:back}
\noindent In this section, we first provide the necessary background of Bitcoin, then present the extended $m$-for-$1$ PoW and finally introduce logical clock.

\subsection{A Primer on Bitcoin}
\noindent In Bitcoin, there are two types of nodes: clients and miners. Clients can create and send transactions to each other, and miners can collect the received transactions, verify them, pack them in blocks. Each block in Bitcoin contains two components, a block header and a set of transactions. The block header includes a hash value of the previous block, a timestamp, a Merkle tree root of transactions and a nonce (whose role will be explained shortly). Blocks are linked together by the hash references and form a chain structure. The chain structure decides a unique sequence of blocks and further provides a sequence of transactions. 

\noindent \textbf{Nakamoto Consensus.} 
Bitcoin relies on NC to reach an agreement of blocks. In NC, all miners obey PoW to generate blocks, in which miners need to find a value of the nonce such that the hash value of the new block has required leading zeros \cite{nakamoto2012bitcoin}. This puzzle-solving process is often referred to as \emph{mining}. Intuitively, the leading zeros determine the chance of finding a new block in each try. By adjusting the number of leading zeros, the blockchain system can control the block generation rate and then maintain a stable chain growth, e.g., one block per 10 minutes in Bitcoin.

Once a new block is produced, it will be broadcast to the entire network. In the ideal case, a block will arrive all the clients before the next block is produced. If this happens to every block, then each client in the system will have the same chain of blocks. In reality, due to the network delay, a miner may produce a new block \emph{before} he or she receives the previous block, a fork will occur where two ``child'' blocks share a common ``parent'' block.  In general, each client in the system observes a tree of blocks due to the forking. As a result, each client has to choose the longest path from the tree as the main chain, known as the longest chain rule. The common prefix of all the main chains is called the \emph{system main chain}---a key concept that will be used in our analysis. 

\noindent \textbf{UTXO Model.}
The Unspent Transaction Output (UTXO) model is widely used in Bitcoin and other blockchains systems \cite{omniledger,Luu2016}. In this model, a transaction consumes unspent outputs as inputs from previous valid transactions and creates new unspent outputs that can be used in a future transaction. Specifically, a transaction allows senders to use multiple UTXOs as inputs, and at most two outputs: one for the payment, and one returning the change, if any, back to the sender \cite{nakamoto2012bitcoin}. It is easy to see that a client owns multiple UTXOs at the same time, and his or her account can be worked out by summing up all the belonging UTXOs, which is different from account/balance model\cite{buterin2014next}. 

In this technical report, we leverage the LCR and $m$-for-$1$ PoW (i.e., an extension of PoW introduced later) to build each chain instance in Eunomia. This is because NC is simple and has many promising properties (See Sec.~\ref{sec:intro}).  More importantly, NC has been thoroughly analyzed in various prior works \cite{garay2015, Pass2017, garay2017, Kiffer2018, Zhang2019CommonMetrics} and proved to provide the end-to-end security. For the transaction model, our system uses the UTXO model for its simplicity and parallelizability. A client can simultaneously issue multiple transactions with different unspent UTXOs. 

\subsection{The $m$-for-$1$ PoW} \label{subsec:parallelmining}
By extending PoW, a miner can simultaneously generate blocks on $m$ chains, and each time a block extends one chain. This extended PoW is known as $m$-for-$1$ PoW \cite{garay2015,fitzi2018parallel}. See Fig.~\ref{fig:mining} for an illustration. In $m$-for-$1$ PoW, each block contains $m$ hash references, each of which links to the last block in $m$ main chains. Each main chain is independently chosen by certain rules (\emph{e.g.}, LCR in NC and the heaviest subtree rule in the GHOST protocol \cite{Sompolinsky2015ghost}). When a valid nonce is found, the last $\log_2{m}$ bits of the block's hash value can index to one of the $m$ chains that the block belongs to \cite{yu2018ohie}. Here, the hash reference of the block in the belonging chain is called chain reference, and the rest $m-1$ are called cross-chain references. As the hash function can be assumed as a random oracle \cite{garay2015,garay2017}, each chain owns the same chance of receiving a valid block and then maintains the same block generation rate on average. Note that there exist many methods of generating a uniform distribution to randomly assign blocks to $m$ chains through the random oracle\cite{bagaria2018deconstructing, fitzi2018parallel}. 
The resulted same block generation rate ensures that each chain owns the same security guarantee. 
Finally, by adjusting NC's mining difficulty to $m$ times harder, each chain instance in the parallel chain can be proved to have the same security guarantee as to the single chain in NC.

\begin{figure}[!ht]
\setlength{\abovecaptionskip}{2pt}
\setlength{\belowcaptionskip}{5pt}
\centering
\includegraphics[width=0.5\linewidth]{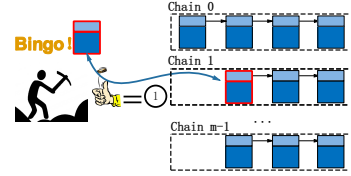}
\caption{The $m$-for-$1$ PoW for parallel chain}
\vspace{-2mm}
\label{fig:mining}
\end{figure}

\subsection{Logical Clock} \label{subsec:clock}
In the distributed system, multiple processes running on different entities need to achieve an agreement with the ordering of events, e.g., the sequences of two messages from two different processes, or to avoid simultaneously access the shared resources, e.g., printer \cite{TanenbaumSteen07}. To realize this, 
in 1978, Leslie Lamport first pointed out that the distributed system can use logical clock instead of the physical clock to capture the causal relationship between events \cite{Lamport1978}. Specifically, the logical clock is an incrementing software counter maintained in each process and synchronized by message exchanges. In \cite{Lamport1978}, he also proposed one realization, called Lamport timestamps, which can provide a partial ordering of events. Furthermore, to realize synchronization of timestamps, Lamport defined a relation called ``happens-before'': $a \rightarrow b$ means that event $a$ happens before $b$. Using Lamport's timestamps together with a breaking tie policy of process ID, all processes can achieve a total ordering of events. 
Based on the seminal work, the vector clock is proposed to capture causality between events by Fidge \cite{fidge1987timestamps} and Mattern \cite{mattern1988virtual} in 1988. 

Our system uses logical clock instead of vector clock due to its simplicity since the logical clock is easier to be updated and compared. What's more, the vector clock cannot exploit its advantage, i.e., providing correct causality between blocks, in the parallel chain because of the forked branches. 

\section{The System Design and Goals} \label{sec:model}
\noindent \textbf{System Model.} Our system model and assumptions follow the well-built formalization of blockchain protocols in \cite{garay2015,Pass2017,Kiffer2018,Pass201701}. A blockchain protocol $\Pi$ refers to an algorithm for a set of Turing Machines, also called participants, to interact with each other. The execution of $\Pi$ is directed by an environment $Z(1^{\kappa})$ (where $\kappa$ is a security parameter), which actives a set of $n$ participants as either being honest for following the protocol, or being corrupt for behaving arbitrarily. 
Particularly, corrupting participants are assumed to have at most $\rho$ fraction of $n$ participants and are controlled by an adversary $\mathcal{A}$. 

The execution of protocol $\Pi$ proceeds in rounds; at each round each participant receives messages from $Z$ (\emph{e.g.}, outstanding transactions and other participants' blocks) and makes a query to the random oracle $H$ (\emph{i.e.}, the hash computation for PoW). For each query, participants have a probability of $mp$ to create a new block, where $m$ is the number of parallel chain instances in Eunomia.  Here $mp$ is referred to as the \emph{mining hardness parameter} (controlled by the leading zeros in PoW). Particularly, the adversary $\mathcal{A}$ can have up to $\rho n$ sequential quires in each round. The NC and Eunomia protocol are referred to as $\Pi_{nak}$ and $\Pi_{euo}$, respectively. The execution of protocol can be modeled as a random variable $\text{EXEC}^{\Pi}(A,Z,\kappa)$, denoting the joint view of all participates (\emph{i.e.}, all their inputs, random coins, and messages received, including those from the random oracle). 



\noindent \textbf{Network Model.}  we assume that honest participants can broadcast messages to each other. $\mathcal{A}$ is responsible for delivering messages (\emph{e.g.}, transactions and blocks) sent by honest participants to all other participants. $\mathcal{A}$ can not modify the content of messages broadcast by honest participants, but it may delay or reorder the delivery of a message as long as it eventually delivers all messages sent by honest participants within some bound $\Delta$. Note that in reality, participants are communicating messages through a gossip network, and thus, honest participants are assumed to sufficiently connect with each other to guarantee the $\Delta$ bounded delay. What's more, to capture the impact of network capacity, we assume the network can transmit at most $\mu$ transactions per round on average. Each block size is $D$ transactions. 


\noindent \textbf{System Goal.} 
As a blockchain consensus protocol, Eunomia can provide the \emph{safety} and \emph{liveness} guarantees with less than $1/2$ adversaries, and meanwhile have the \emph{optimal} throughput performance. Particularly, in context of blockchain, \emph{safety} corresponds to \emph{consistency} \cite{yu2018ohie,garay2015,garay2017}. Thus, Eunomia aims to achieve the followings: 

\begin{enumerate}[label=(\arabic*)]
\item \textbf{Consistency.} For any pair of honest participants $P_1$, $P_2$ outputting globally confirmed block sequence $L_1$, $L_2$ at time $t_1 \leq t_2$, the protocol holds that $L_1$ is a prefix of $L_2$.

\item \textbf{Liveness.} All transactions issued by honest clients will eventually get incorporated into an ordered confirmed block in any honest participant's blockchain.

\item \textbf{Near-optimal Network Utilization.} Eunomia aims to achieve a throughput, in terms of transactions processed per second, that approaches the network capacity.
\end{enumerate}


\section{UTXO sharding and SPV} \label{sec:sharding}
\noindent Eunomia composes $m$ parallel chains. Each chain $\mathcal{C}$ is identified with a chain index $i$ ($i \in \{0,1,...,m-1\}$) and initialized with a genesis block $\mathcal{G}_i$. The miners adopt LCR and $m$-for-$1$ PoW to currently mine on $m$ chains. Each time a mined block $\mathcal{B}$ extends one chain. The index of belonging chain is denoted by $ch(\mathcal{B})$, and the referred block in chain  $C_{ch(\mathcal{B})}$ is called parent block. Each block consists of two parts: a block header and a transaction metadata. As shown in Fig.~\ref{fig:blockheader}, the block header includes a timestamp, a nonce, a hash value of synchronized block, a Merkle tree root of transaction metadata, etc. The reference of synchronized block and transaction metadata do not exist in Bitcoin and will be introduced shortly. 

\subsection{UTXO Sharding} \label{subsec:utxo}
\noindent In Eunomia, each UTXO contains a $\left \lceil \log_2(m) \right \rceil$ bits filed called sharding index. The UTXO's sharding index $i$ ($i \in [ 0, 1, ..., m-1] $) represents that the UTXO is only allowed to be spent in chain $C_i$. In other words, a transaction needs to contain the UTXOs with the same sharding index as input, whereas can generate output UTXOs with any sharding index ranging from $0$ to $m-1$. Thus, according to input UTXOs' sharding index $i$, outstanding transactions can be put into corresponding transaction sets $S_i \left ( i \in [0,1,..,m-1] \right )$. As a result, the transaction $tx$ in $\mathcal{S}_i$ will be mined in chain $C_i$. 
It is easy to see that the same transactions or the conflicting transactions cannot be included in several blocks located in different chains, leading to no transaction redundancy or transaction confliction. 

\subsection{Transactions Metadata} \label{subsec:trans}
\noindent In Eunomia, participants do not know which chain the future mined block will extend because of $m$-for-$1$ PoW. Thus, they need to prepare outstanding transactions and the last blocks' hash value $hash_i$ for each chain $\mathcal{C}_i$. When the number of chain instances $m$ grows larger (e.g., $m = 1000$), the $m$ hash references and the $m$ Merkle roots of transactions directly included in the block header will become an overwhelming overhead. 

To solve this, Eunomia leverages the widely-used Merkle Tree \cite{Merkle, merkle1, nakamoto2012bitcoin} to compress these protocol overhead. Specifically, participants first construct transactions for the same chain as a Merkle tree and compute the Merkle tree root $\mathcal{\gamma}_i$, which is similar to Bitcoin\cite{nakamoto2012bitcoin}. Then, they use the the pair $(hash_0, \mathcal{\gamma}_0)$ through $(hash_{m-1}, \mathcal{\gamma}_{m-1})$ as the Merkle tree leaves and compute the related root $\mathcal{\gamma}$. Finally, the Merkle tree root $\mathcal{\gamma}$ is contained into the block header as input to the PoW puzzle. See Fig.~\ref{fig:blockheader} for an illustration. After a block $\mathcal{B}$ is mined, the last $\log_2(m)$ bits of the block's hash value will determine its belonging chain index ${ch(\mathcal{B})}$. When broadcasting the block, the chosen transaction list, hash references for chain $\mathcal{C}_{ch(\mathcal{B})}$ and the Merkle tree proof of $(hash_{ch(\mathcal{B})}, \mathcal{\gamma}_{ch(\mathcal{B})})$ will be attached. When receiving the block, other nodes in the network can verify the included transactions and hash reference for chain $C_{ch(\mathcal{B})}$. 

\begin{figure}[!ht]
\setlength{\abovecaptionskip}{2pt}
\setlength{\belowcaptionskip}{5pt}
\centering
\includegraphics[width=0.9\linewidth]{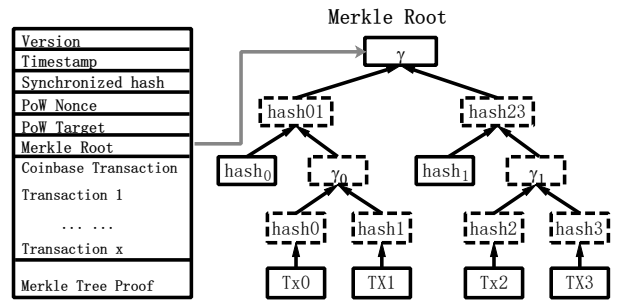}
\caption{Data structure of the block and the Merkle tree of transaction metadata.}
\vspace{-2mm}
\label{fig:blockheader}
\end{figure}

\subsection{Cross-Chain UTXO}
Recall Sec.~\ref{subsec:utxo}, clients are allowed to issue transactions containing the output UTXOs, which have different sharding indexes with the input UTXOs. In other words, these output UXTOs, namely cross-chain UTXOs, will be spent in a future transaction located in another chain. By utilizing the cross-chain UTXOs, one one hand clients can collect their owned unspent UTXOs in one chain and combine them for future transactions with a large value input, avoiding issuing multiple splitting transactions in different chains; on the other, clients can freely choose the chain with less outstanding transactions, which can balance chains' transaction workload. 

\noindent \textbf{Stale Blocks.}
For each chain instance in Eunomia, miners have the chance of mining blocks on multiple forked branches, which share the common ancestor blocks. Thus, miners in the system observe $m$ trees of blocks due to the forking. As a result, miners choose a main chain from each tree to mine according to LCR and eventually reach an agreement of $m$ system main chains. The blocks included in the system main chain are called regular blocks, and the rest are called stale blocks. In Eunomia, only the stale blocks, which provide synchronization references for regular blocks will be kept, and the rest will be discarded (See details in Sec.~\ref{subsec:propa}.). 

However, the stale block may affect the validity of cross-chain UTXOs and the subsequent transactions containing these UTXOs. For example in Figure.~\ref{fig:utxos}, if a block $A$ becomes stale and the containing transaction $tx1$ is invalidated, and then a subsequent transaction $tx2$ in block $B$ of another chain, which contains the cross-chain UTXO generated in transaction $tx1$ will be affected. In that case, block $B$ will not be invalidated, and just the transaction $tx2$ will be invalidated. Furthermore, all subsequent transactions $tx3$ and $tx4$ that relay on the cross-UTXOs in transaction $tx1$ will become invalidated.

\begin{figure}[!ht]
\setlength{\abovecaptionskip}{2pt}
\setlength{\belowcaptionskip}{5pt}
\centering
\includegraphics[width=0.9\linewidth]{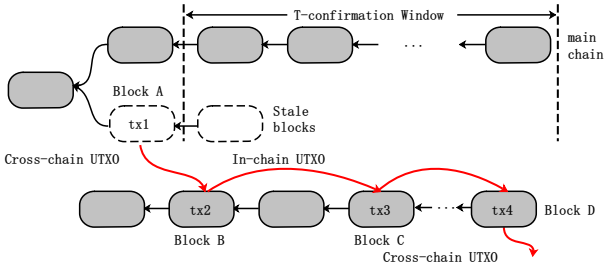}
\caption{The infectious window for Cross-chain UTXO due to orphan blocks.}
\vspace{-2mm}
\label{fig:utxos}
\end{figure}

\noindent \textbf{$T$-confirmation Window.} 
In Eunomia, a block in any chain instance is confirmed with high probability when there are $T$ successive blocks appended. It is also known as $T$-confirmation (e.g., $T = 6$ in Bitcoin). See Sec.~\ref{sec:analysis} for the proved security properties. In other words, after waiting for a $T$-block growth of the chain, miners can tell whether a block is confirmed or stale with high probability, which we refer it to as $T$-confirmation window. Thus, to reduce the domino effects of invalidated cross-chain UTXOs, one simple solution is that miners do not include any transactions, which use the cross-chain UTXOs as input, in their blocks until the originate block is confirmed. However, this method will double the latency of the transaction containing input cross-chain UTXOs because of the additional $T$-confirmation for originate transactions. 

To solve this, we allow the subsequent transactions interleaving with the cross-chain UTXOs to be included in the chain first, and the corresponding confirmation is settled later. We call such a confirmation, \textbf{Post-Confirmation}. It is easy to see post-confirmation can reduce the transactions' latency. Another reason for adopting post-confirmation is based on the observation that the domino effect of invalidated cross-chain UTXO cannot reach farther with the $T$-confirmation properties. For example, in Figure.~\ref{fig:utxos}, in the $T$-confirmation window, as miners do not know the transaction $tx1$ and its output cross-chain UTXOs will be invalidated in the future, they can accept the block, which contains the transaction interleaving with the originate cross-chain UTXOs. Once the originate chain grows beyond the $T$-confirmation window, miners will know the cross-chain UTXOs and any transaction interleaving with the cross-chain UTXOs are invalid, and then miners will refuse to accept any blocks containing these invalid transactions. In other words, an invalidated cross-chain UTXOs will only affect the involving transactions in the $T$-confirmation window. 


\subsection{Simple Payment Verification (SPV)} \label{subsec:spv}
In Eunomia, there are two kinds of clients: the full client with a full record of transactions and the light client with only block headers, hash references of parent block, and Merkle tree proof for transaction metadata. The hash references of parent block and Merkle tree proof for transaction metadata provide evidence for blocks to be chained. It is easy for a full client to check the validity of transactions by backtracing all input UTXOs in the globally confirmed transaction sequences (which will be introduced shortly). By contrast, because of the invalidated cross-chain UTXOs light clients need to leverage a $T$-traced verification to realize SPV. 

In NC, a light client can use SPV to verify three things: a received transaction is included in the corresponding block; all blocks in the chain have valid proof-of-Work; the block is inserted more than $k$ blocks deeper in the chain, counting from the last block to the corresponding block. The value of $k$ can be set form one to six, depending on the transaction value. However, in Eunomia, due to the potentially existed invalidated cross-chain UTXOs, light clients have one more procedure and need to check the validity of the input UTXOs, i.e., checking whether these input UTXOs interleave with some invalidated cross-chain UTXOs. The key idea is utilizing the $T$-confirmation window. The light client needs to backtrace the input UTXOs in $T$ blocks deeper and ensures miners have confirmed their validity over the $T$-confirmation. 

For a concrete example in Fig.~\ref{fig:utxo}. Bob is a light client who receives a transaction $3$ from Alice. He first needs to carry the procedures of SPV in NC to make sure the received transaction is correct except the input UTXOs. Then Bob will verify the input UTXO of transaction $3$, which is the corresponding output UTXO of transaction $2$ in Block $B$. However, block $B$ is unconfirmed, then Bob will continue to verify its input UTXO and eventually find Block $A$ is confirmed. 

\begin{figure}[ht]
\setlength{\abovecaptionskip}{2pt}
\setlength{\belowcaptionskip}{5pt}
\centering
\includegraphics[width=0.9\linewidth]{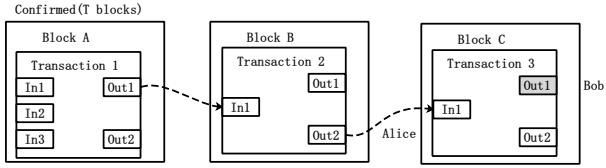}
\caption{A Simple Case for chained confirmation for SPV in Eunomia.}
\vspace{-2mm}
\label{fig:utxo}
\end{figure}



\section{Global Confirmed Block Ordering} \label{subsec:order}
\subsection{Directed graph.} In Eunomia, each node maintains a local view of the directed graph $\mathcal{D}$, depending on its receiving blocks. Each vertex is a block, and each directed edge presents a hash reference of the previous block. Upon receiving a new block $\mathcal{B}$, the node will update its local view as follows: $1)$ Only the block with a valid proof of work will be processed and stored; $2)$ If the node has the referred parent block and synchronized block (introduced later) in local view, it will compute and verify the block's clock, and then append the block to chain $\mathcal{C}_{ch(\mathcal{B})}$. Otherwise, the block will be cached until parent block or synchronized block is received. \\ 

\subsection{Virtual logical clock.} Each chain has a ``virtual'' clock, which is a counter and increments for each mined block in the chain. Here the virtual logical clock is called because chains' logical clock is not contained in the block field and is computed based on nodes' local view of directed graph $\mathcal{D}$. To synchronize clocks between chains, participants include a hash reference of the synchronized block, \emph{i.e.}, the block with the largest virtual logical clock in $m$ main chains, in the pre-mined block's header. As blocks' clock is monotonically increasing in the same chain, the synchronized block is one of the $m$ last blocks. It is easy to see that the clock synchronizing process is maintained by all participants, \emph{i.e.}, in a distributed way. 

When a new block is going to be appended to the nodes' local view, nodes will invoke Algorithm \ref{Algo1} to compute the block's virtual clocks. 
The algorithm initiates Genesis block $\mathcal{G}_i$'s logical clock as zero. Except for Genesis blocks, the algorithm first obtains block's parent block's clock and synchronized block's clock (See line $6-10$). In reality, nodes can cache the latest blocks' clocks in local memory and avoid calling the function repeatedly. The line $10-14$ ensures participants to include the block with the highest clock as the synchronized block. Otherwise, their blocks have a chance of being timestamped with invalid clock and then are refused by other nodes, which provides the incentive compatibility for obeying the protocol (addressed later). The virtual clock algorithm is designed as simply as possible, aiming to provide easily proved end-to-end security. \\

\begin{algorithm}[!ht]
\caption{logicalClockCompute or LCC in short} \label{Algo1}
\begin{algorithmic}[1]
    \State \emph{Input:} Directed graph $\mathcal{D}$, block $\mathcal{B}$ 
    \State \emph{Output:} Virtual logical clock $v$ of $\mathcal{B}$ $\in \mathbb{R}$
    \If{$\mathcal{B} \in \left ( \mathcal{G}_i \right )_{i=0}^{n}$}
		\State   $v \gets 0$ 
	\Else \Comment{Synchronize chains' clock}
	    \State  $\mathcal{B}_j$ $\gets$ AccessSynchronizedBlocks($\mathcal{D}$, $\mathcal{B}$)
        \State  {$v_j$ = logicalClockCompute($\mathcal{D}$, $\mathcal{B}_j$)} 
        \State  $\mathcal{B}_i$ $\gets$ AccessParentsBlocks($\mathcal{D}$, $\mathcal{B}$)
        \State  {$v_i$ = logicalClockCompute($\mathcal{D}$, $\mathcal{B}_i$)} 
        \If{$v_i > v_j$} \Comment{Incentive compatibility for ordering}
            \State $v \gets -1$
        \Else
		    \State   $v \gets v_j + 1$ 
        \EndIf
    \EndIf 
    \State   \Return  $v$       
\end{algorithmic}
\end{algorithm}

\subsection{Globally confirmed block sequence.}
Each nodes can output a global order on blocks by further using a global logical timestamp $(v,i)$, where $v$ is blocks' virtual logical clock and $i$ represents the blocks' belonging chain index. Specifically, blocks across $m$ main chains can be globally ordered by their clock $v$ with a tie breaking of $i$. 

However, to ensure the consistency property, nodes need further output a global order of confirmed blocks. That implicitly imply that the blocks of each chain participating in global order need to be confirmed first, which we refer to as per-chain confirmed blocks. Specifically, per-chain confirmed blocks are the blocks in each nodes' local chains except the last $T$ blocks according to Theorem \ref{lamma:1}. For example, a block in Bitcoin usually has to wait for six new blocks added to be confirmed. Thus, nodes do not consider the last $T$ unconfirmed blocks of each chain. Let $\mathcal{B}_i$ denote the last per-chain confirmed block in chain $C_i$ with logical clock $v_i$, and let $synchronized\_bar \leftarrow \min_{i=0}^{m}\left \{v_i \right \}$. Then nodes output all the per-chain confirmed blocks whose logical clock is less than $synchronized\_bar$ in the global order, \emph{i.e.}, the globally confirmed block sequence.

For a concrete example, in Fig.~\ref{fig:order} we use a round-by-round model to illustrate the block generation and ordering process. In this example, a block reaches all nodes by the end of the round and becomes per-chain confirmed after $T = 2$ confirmations. It is easy to see that block $A4$, $B7$ and $C3$ are the last per-chain confirmed blocks, and the $synchronized\_bar$ equals $4$. Thus according to the ordering rule, each honest node outputs the total block ordering: $A1, C1, A3, B1, B3, A4, B5$. Although we present the example using a synchronous model, Eunomia's security will be proved in an asynchronous model introduced later. \\

\begin{figure}[!ht]
\setlength{\abovecaptionskip}{2pt}
\setlength{\belowcaptionskip}{5pt}
\centering
\includegraphics[width=0.9\linewidth]{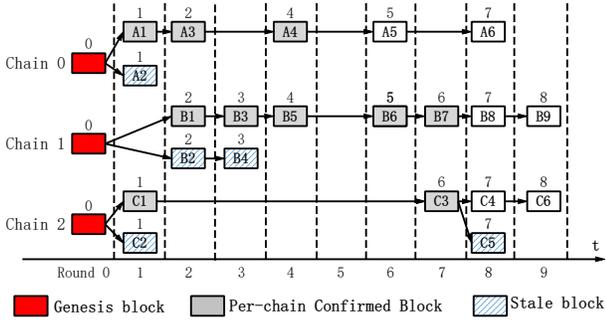}
\caption{An illustration of globally confirmed blocks}
\vspace{-2mm}
\label{fig:order}
\end{figure} 

\subsection{Incentive Design}
Eunomia adopts a similar incentive mechanism as Bitcoin \cite{nakamoto2012bitcoin}, in which participants can receive a block reward, \emph{i.e.}, some amount of self-issued tokens, for every mined block included in the globally confirmed blocks. In addition, participants can also get the transaction fees for all the confirmed transactions that are contained in the block. As the incentives in Bitcoin has been well studied \cite{eyal2014majority, sapirshtein2016optimal, gervais2016security}, the following focuses on illustrating the incentives in clock synchronization. 

In Eunomia, participants concurrently mine on $m$ last blocks of $m$ main chains and choose one block with the largest virtual clock as the synchronized block. Once a block is mined, it will synchronize its belonging chain's clock with the largest clock, resulting in an increase of $synchronized\_bar$ and global confirmation of blocks. However, some participants may deviate from the protocol and launch ordering attack, \emph{i.e.}, including arbitrary block as the synchronized block. The following analysis shows that the ordering attack will not have an impact on the security of Eunomia, and meanwhile, the attacker will suffer from economic loss. 

First, if the synchronized block either does not exist or does not pass the validity check, the referring block will not be processed by Algorithm \ref{Algo1}. Second, if the existed synchronized block is valid, but is not the block with the largest clock in the participants' view, there are two possible subcases: $1)$ Participants mine on $m$ last blocks, but use a block with a smaller logical clock as synchronized block; $2)$ Participants do not mine on the last blocks in some of the $m$ main chains, but mine on ancestor blocks with a smaller clock. Participants in the first subcase have the chance of making the block invalid for not passing the clock validity check (See line $10$-$14$ in Algorithm \ref{Algo1}.), and in the second have the chance that the mined blocks do not belong to the longest chain. It is easy to see both of these cases will make deviating participants suffer from economic loss, which provides incentive compatibility against the ordering attack. By contrast, OHIE in \cite{yu2018ohie} lacks effective methods to deal with the ordering attack.
What's more, once an honest block included in the chain will synchronize the chain with the latest clock immediately. 

\section{Security Analysis} \label{sec:analysis}
\noindent The security analysis of Eunomia is based on several prior works of NC in \cite{garay2015, garay2017, Pass2017}. Specifically, the proof composes two steps. First, each chain instance in Eunomia is proved to has the same properties as the single chain in NC. Second, extending each chain's properties further proves \emph{consistency} and \emph{liveness} of Eunomia. 

\subsection{Starting Point from NC.} 
We first give a well-proved corollary from \cite{Pass2017}, which is the base for our later analysis:

\begin{theorem} \label{theo:1}
(Corollary $3$ in \cite{Pass2017}.) Consider any given constant $\rho < \frac{1}{2}$. Then there exists some positive
constant $c$ such that for all $p < \frac{1}{c \Delta n}$, the $\Pi_{nak}$ satisfies all the following properties in $(\rho, \Delta, n, p)$-environment 
with probability that drops exponentially in $T$:
\begin{itemize}
    \item (Consistency) Let $S_1$ ($S_2$) be the sequence of blocks on the chain on any Participant $P_1$ ($P_2$) at any time $t_1$ ($t_2$), excluding the last $T$ blocks on the chain. Then either $S_1$ is a prefix of $S_2$ or $S_2$ is a prefix of $S_1$.

    \item (Chain-growth) The length of the main chain of any honest participants increases by at least $T$ blocks every $\frac{2T}{pn}$ round.
    
    \item (Chain-quality) For any $T$ consecutive blocks in any main chain of any honest participants, $(1 - \frac{\rho}{1 - \rho}) T$ blocks were contributed by honest participants.
\end{itemize}
\end{theorem}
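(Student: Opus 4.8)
The plan is to treat this statement as a direct specialization of the Nakamoto-consensus analysis of \cite{Pass2017}, so the cleanest route is to map our $(\rho,\Delta,n,p)$-environment onto the model of that paper and invoke its main result; what follows sketches the underlying argument so the reader sees why the three bullets hold and, crucially, where the hypotheses $\rho<\tfrac12$ and $p<\tfrac{1}{c\Delta n}$ enter. First I would fix the mining random variables: for each round $r$ let $X_r$ be the indicator that at least one honest query succeeds and let $Z_r$ be the number of successful adversarial queries, so that $\mathbb{E}[X_r]\approx(1-\rho)np$ and $\beta:=\mathbb{E}[Z_r]\approx\rho np$ for small $p$. To cope with the $\Delta$-bounded delay I would additionally single out the \emph{convergence opportunities}: rounds in which exactly one honest participant mines and no honest participant mines in the surrounding $\Delta$ rounds. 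Their rate $\gamma$ is the honest-progress quantity that the adversary must outrun, and the entire analysis hinges on comparing $\gamma$ with $\beta$.

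Next I would pass to a \emph{typical execution} by applying Chernoff bounds to the independent per-round indicators and Azuma-type bounds to the adaptively-chosen adversarial counts, over any window spanning $\Omega(T/pn)$ rounds; each deviation event fails with probability exponentially small in $T$, and a union bound over the polynomially many relevant windows preserves this, which yields the ``probability that drops exponentially in $T$'' in the statement. On this typical execution the three properties follow in order: \emph{chain growth} is the lower bound on successful rounds, since each such round forces the honest main chain forward by at least one block, so $T$ blocks accrue within $2T/pn$ rounds; \emph{chain quality} follows by comparing, inside any $T$-block window, the honest contribution against the upper bound on $\sum_r Z_r$, which caps the adversarial share and leaves the stated honest fraction $\bigl(1-\tfrac{\rho}{1-\rho}\bigr)T$.

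The hard part is \emph{consistency} (common prefix). Here I would argue that if two honest parties held chains diverging by more than $T$ blocks below their tips, then inside the divergence window the adversary must have matched or cancelled every convergence opportunity with a private block, forcing $\sum_r Z_r\ge\gamma\cdot(\text{window length})$; but the typical-execution bounds give $\gamma>\beta$ whenever $\rho<\tfrac12$, a contradiction. The delicacy is entirely in establishing $\gamma>\beta$ in the asynchronous setting: the delay $\Delta$ lets honest work be wasted on simultaneous or late blocks, shrinking $\gamma$ by a factor that degrades with $\Delta np$. This is exactly why the hypothesis $p<\tfrac{1}{c\Delta n}$ is needed --- it keeps $\Delta np$ small enough that the honest convergence rate still dominates the adversarial rate --- and pinning down the constant $c$ for which this holds is the main technical obstacle, which is precisely what \cite{Pass2017} supplies. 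Since our environment is a verbatim instance of theirs, I would finish by citing Corollary~3 of \cite{Pass2017} for the quantitative bounds.
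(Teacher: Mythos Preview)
Your proposal is correct, and in fact it does strictly more than the paper does: the paper offers no proof at all for this theorem, treating it purely as an imported black box (it is stated verbatim as ``Corollary~3 in \cite{Pass2017}'' and is simply quoted as the starting point for the subsequent analysis). Your write-up, by contrast, sketches the actual mechanics of the \cite{Pass2017} argument --- the convergence-opportunity counting, the typical-execution Chernoff bounds, and the $\gamma>\beta$ comparison that drives common-prefix --- before likewise deferring the constants to the cited corollary. Both approaches ultimately terminate in the same citation, so there is no discrepancy; your version is just more self-contained and makes explicit why the hypotheses $\rho<\tfrac12$ and $p<\tfrac{1}{c\Delta n}$ are needed, which the paper leaves entirely to the reference.
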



\begin{lemma} \label{lamma:1}
Consider any give $i$ where $0 \leq i \leq m-1$, the chain $\mathcal{C}_i$ of protocol $\Pi_{enu}$ in $(m, \rho, \Delta, n, mp)$-environment has the same properties with the single chain of $\Pi_{nak}$ in $(\rho, \Delta, n, p)$-environment.
\end{lemma}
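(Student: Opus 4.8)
The plan is to construct a coupling (a statistical reduction) between the block-generation process restricted to chain $\mathcal{C}_i$ under $\Pi_{enu}$ and the full block-generation process of a standalone $\Pi_{nak}$ execution with mining hardness $p$, and then invoke Theorem~\ref{theo:1} verbatim on the coupled NC execution. The core observation is that in $m$-for-$1$ PoW each random-oracle query succeeds with probability $mp$, and conditioned on success the last $\log_2 m$ bits of the hash value assign the block to chain $\mathcal{C}_i$ with probability exactly $1/m$. Since $H$ is modeled as a random oracle, the low-order assignment bits are independent of the event that the query meets the PoW threshold, so the probability that a given query yields a block belonging to $\mathcal{C}_i$ is $mp \cdot \frac{1}{m} = p$. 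Moreover, outputs of distinct queries are independent, and each successful query produces exactly one block on exactly one chain; hence the arrival process of blocks on $\mathcal{C}_i$ is identically distributed to the block-arrival process of a single NC chain run with per-query success probability $p$.

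First I would make this coupling precise: map each honest participant's (and the adversary's) query sequence in the $(m,\rho,\Delta,n,mp)$-environment to a query sequence in a $(\rho,\Delta,n,p)$-environment by retaining only those queries whose outcome lands on $\mathcal{C}_i$. Because the uniform chain assignment is independent of whether the querier is honest or corrupt, the adversary retains the same fraction $\rho$ of the effective mining power on $\mathcal{C}_i$ while the honest participants retain the remaining $1-\rho$; the delay bound $\Delta$ and participant count $n$ are unchanged. Under this map the parameter constraint $p < \frac{1}{c\Delta n}$ is exactly the hypothesis of Theorem~\ref{theo:1}, so it transfers without modification.

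Next I would verify that the Eunomia-specific machinery does not disturb the longest-chain dynamics on $\mathcal{C}_i$ that NC's analysis relies on. Concretely, I would argue that (i) adoption of a block into $\mathcal{C}_i$ under LCR depends only on its chain reference to a parent block in $\mathcal{C}_i$, so the cross-chain references and the transaction-metadata Merkle structure are inert with respect to the per-chain tree topology; and (ii) the virtual-logical-clock check of Algorithm~\ref{Algo1} never rejects an honestly produced block, since an honest participant always selects the synchronized block with the largest clock and therefore never triggers the $v \gets -1$ branch. Thus every honest block contributes to chain growth and chain quality exactly as in NC, while any adversarially malformed-clock block can only be discarded, which does not help the adversary. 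With these two facts the coupled NC execution faithfully reproduces the honest and adversarial block production on $\mathcal{C}_i$, and applying Theorem~\ref{theo:1} to it yields consistency, chain-growth, and chain-quality with the stated parameters, which the coupling transports back to $\mathcal{C}_i$.

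I expect the main obstacle to be step (ii): rigorously showing that the logical-clock check and the synchronized-block dependency give the adversary no new way to stall or reorganize $\mathcal{C}_i$ beyond what a standalone NC adversary can do. In particular, a block is cached until its synchronized block arrives rather than appended immediately, and I would need to argue this deferred adoption does not effectively inflate the delivery delay past $\Delta$ in a way that would push the execution outside the $p < \frac{1}{c\Delta n}$ regime. Handling this cleanly likely requires either assuming synchronized blocks propagate within the same $\Delta$ bound or folding the extra wait into a constant-factor redefinition of the delay parameter, after which the reduction closes.
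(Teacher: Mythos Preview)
Your proposal is correct and takes the same core approach as the paper: both argue that the random-oracle independence of the PoW threshold test and the trailing-bit chain assignment gives each query probability exactly $p$ of landing on $\mathcal{C}_i$, and that the committed parent hash (via the Merkle root in the header) makes the per-chain linking identical to NC. You are in fact more thorough than the paper's own proof of this lemma, which does not address the logical-clock check or the synchronized-block caching delay at all---those concerns are deferred to the informal discussion in Section~\ref{subsec:propa}.
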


\begin{proof} Without loss of generality, we consider chain $C_0$ of protocol $\Pi_{enu}$ and shows it runs in the same way except how the blocks are generated for the chain. In $\Pi_{enu}$, each participant has a probability $mp$ to create a new block with $\log(\frac{1}{mp})$ leading zeros, and a probability $\frac{1}{m}$ to has a returned value with $\log(m)$ tailing zeroes in each query to the random oracle. As these two events are independent, miners have a probability $p$ to create a new block for chain $C_0$, which has $\log(\frac{1}{mp})$ leading zeros and $\log(m)$ tailing zeroes in each query to the random oracle. Thus, chain $C_0$ has the same probability with the single chain of $\Pi_{nak}$ to have a new block. 

In $\Pi_{enu}$, the parents' hash value is included in the transaction metadata Merkle tree, and Merkle tree root is included in the block header. Same with $\Pi_{nak}$, the parent block can never be reverted once the block is generated. That means the system knows a single parent block. Thus, blocks in chain $C_0$ are linked just as the single chain of $\Pi_{nak}$. 
\end{proof}

\subsection{Main Theorem Statements}
Eunomia is proved to satisfy the following properties:
\begin{theorem} \label{theo:2}
Consider any given constant $\rho < \frac{1}{2}$. Then there exists some positive constant $c$ and positive integer $m$ such that for all $p < \frac{1}{c \Delta n}$, the $\Pi_{enu}$ satisfies all the following properties in $(m, \rho, \Delta, n, mp)$-environment with probability that drops exponentially in $T$:
\begin{itemize}

    \item (Consistency) For any pair of honest participants $P_1$, $P_2$ outputting a globally confirmed block sequence $L_1$, $L_2$ at time $t_1 \leq t_2$, the protocol holds that $L_1$ is a prefix of $L_2$.
    
    \item (L-growth) For any integer $\gamma \geq 1$, the length of the globally confirmed block sequence $L$ of any honest participants increases by at least $m \cdot \gamma \cdot T$ blocks every $(\gamma + 2) \frac{2T}{pn}$ round.
    
    \item (L-quality) For any $m \cdot \gamma \cdot T$ consecutive blocks in the globally confirmed block sequence $L$ of any honest participants, $m \cdot \gamma \cdot (1 - \frac{\rho}{1 - \rho}) T$ blocks were contributed by honest participants.
     
\end{itemize}
\end{theorem}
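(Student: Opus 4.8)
The plan is to lift the three single-chain guarantees of Lemma~\ref{lamma:1} (which instantiates Theorem~\ref{theo:1} on each chain) to the globally confirmed sequence $L$, using only the structure of Algorithm~\ref{Algo1} and the definition of $synchronized\_bar$. I would first record two elementary facts. Along any single chain the clock is strictly increasing, since a valid block has $v = v_j + 1 > v_i$ for parent clock $v_i$; consequently a block's clock is at least its height. Moreover $L$ is downward closed under both the parent edge and the synchronized-block edge: a block of $L$ has clock $v < synchronized\_bar$, so its synchronized block has clock $v-1$ and its parent clock $v_i \le v-1$, both below $synchronized\_bar$, hence both lie before the last per-chain confirmed block of their chains (clocks increasing) and are themselves in $L$. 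Thus the clock of every block of $L$ is computed entirely from per-chain confirmed blocks, so by Lemma~\ref{lamma:1} all honest participants agree on it.

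For Consistency I would first show $synchronized\_bar$ is monotone: if for every $k$ the confirmed chain $C_k$ of $P_1$ is a prefix of that of $P_2$ (per-chain Consistency of Lemma~\ref{lamma:1}, with $P_2$ at least as long since $t_2 \ge t_1$ by Chain-growth), then $P_1$'s last confirmed clock satisfies $\ell_k^{(1)} \le \ell_k^{(2)}$, whence $synchronized\_bar_1 = \min_k \ell_k^{(1)} \le \min_k \ell_k^{(2)} = synchronized\_bar_2$. Next, $L_1$ is exactly the per-chain confirmed blocks with clock below $synchronized\_bar_1$, ordered by $(v,i)$. Every such block has clock $< synchronized\_bar_1 \le \ell_k^{(2)}$, so it lies in $P_2$'s confirmed prefix with the same clock; conversely no block of clock $< synchronized\_bar_1$ can be confirmed by $P_2$ but not $P_1$, since $P_1$'s confirmed $C_k$ already reaches clock $\ell_k^{(1)} \ge synchronized\_bar_1$. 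The two views therefore agree on all blocks below $synchronized\_bar_1$, the order $(v,i)$ places all of them before any block of clock $\ge synchronized\_bar_1$, and it is deterministic; hence $L_1$ is a prefix of $L_2$.

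For L-growth and L-quality I would decompose along chains. Writing $\mathrm{conf}_k(t)$ for the number of per-chain confirmed blocks of $C_k$ and $X_k(t)$ for those with clock at least $synchronized\_bar$, one has $|L(t)| = \sum_k (\mathrm{conf}_k(t) - X_k(t))$. Per-chain Chain-growth gives that over $(\gamma+2)\frac{2T}{pn}$ rounds each $\mathrm{conf}_k$ rises by at least $(\gamma+2)T$, so $\sum_k \mathrm{conf}_k$ rises by at least $m(\gamma+2)T$; it then suffices that $X_k(t) \le 2T$ for every chain and time, which yields a net increase of at least $m(\gamma+2)T - 2mT = m\gamma T$. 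Via strictly increasing integer clocks this reduces to a synchronization estimate $\ell_k - \min_j \ell_j \le 2T$. For L-quality I would map a window of $m\gamma T$ consecutive blocks of $L$ to a clock interval; because clocks increase along each chain, the blocks this interval takes from any fixed $C_k$ are consecutive in $C_k$, so per-chain Chain-quality applies run by run and the honest fractions sum to $m\gamma(1-\frac{\rho}{1-\rho})T$, provided the runs split the window into near-balanced multiples of $T$.

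The main obstacle, shared by L-growth and L-quality, is precisely this clock-synchronization (balance) estimate bounding the spread $\ell_k - \min_j \ell_j$ by $O(T)$, equivalently $X_k \le 2T$. It does not follow from the per-chain guarantees in isolation: it requires that the $m$-for-$1$ PoW assigns blocks to chains uniformly and that, by concentration over windows of $\Theta(\frac{2T}{pn})$ rounds, every chain both grows and is confirmed at essentially the same rate, so that the virtual clocks of the confirmed tips never drift apart by more than the confirmation depth. Establishing this two-sided growth bound and converting it into the $X_k \le 2T$ estimate, while absorbing the boundary losses in the quality argument, is where the real work lies; Consistency, by contrast, needs only the monotonicity of $synchronized\_bar$ and the downward-closure of $L$, both immediate from Algorithm~\ref{Algo1}.
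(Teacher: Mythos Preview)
Your Consistency argument is essentially the paper's (its Lemma~\ref{lemma:1}): per-chain consistency makes each confirmed $E_1(i)$ a prefix of $E_2(i)$, monotonicity of clocks along a chain gives $s_1\le s_2$, and the deterministic $(v,i)$ order then forces $L_1$ to be a prefix of $L_2$.

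The gap is in L-growth. You correctly isolate the synchronization estimate as the crux, but you misdiagnose what is needed to prove it. You propose to bound the clock spread via a concentration argument over the uniform $m$-for-$1$ assignment, so that all chains' \emph{heights} stay close; but clock differences are not height differences (a single honest block can jump a chain's clock by an arbitrary amount), so that route is both indirect and would further require an \emph{upper} bound on per-chain growth that is not among the stated guarantees. The paper needs no probabilistic input beyond the three single-chain properties you already invoked. The missing observation is that \emph{chain-quality already does the synchronization}: among any $T$ consecutive blocks on any chain there is at least one honest block, and by Algorithm~\ref{Algo1} an honest block always takes as its synchronized reference a block with the currently largest clock, so its own clock is at least that maximum plus one. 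Hence if $x$ is the largest clock among the last per-chain confirmed blocks at time $t_0$, then after $\frac{2T}{pn}$ rounds every chain has gained $T$ new blocks including one honest one, so every chain's tip clock is now at least $x+1$; after another $\frac{2T}{pn}$ rounds those blocks are per-chain confirmed, whence $synchronized\_bar\ge x+1$ and every block that was per-chain confirmed at $t_0$ is now in $L$. This is the paper's Lemma~\ref{lemma:2}, and it turns your ``main obstacle'' into a two-line consequence of growth plus quality. L-growth then follows by counting: $\gamma\frac{2T}{pn}$ rounds produce $\ge\gamma T$ newly per-chain confirmed blocks on each chain, and $\frac{4T}{pn}$ further rounds push all of them into $L$, giving $m\gamma T$ new globally confirmed blocks in $(\gamma+2)\frac{2T}{pn}$ rounds. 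Your block-count decomposition $|L|=\sum_k(\mathrm{conf}_k-X_k)$ with target $X_k\le 2T$ is an unnecessary detour; the paper's temporal phrasing (per-chain confirmed $\Rightarrow$ globally confirmed within $\frac{4T}{pn}$ rounds) sidesteps it entirely. For L-quality the paper is in fact sketchier than your outline and simply reads it off the same lemma together with per-chain quality, so your run-by-run plan is fine once the synchronization step is fixed as above.
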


Theorem \ref{theo:1} shows that the globally confirmed block sequence $L$ satisfies the consistency, quality, and growth properties. Specifically, the $L$-growth and $L$-quality states the globally confirmed block sequence $L$ will include honest miners' blocks at a certain rate. Furthermore, clients' transactions are guaranteed to be eventually processed. In other words, consistency corresponds to the safety of Eunomia, while the combination of quality and growth captures the liveness of Eunomia.


\begin{lemma} \label{lemma:1}
If the three properties in Theorem 2 hold for each of the $m$ chains of protocol $\Pi_{enu}$ in $(m, \rho, \Delta, n, mp)$-environment, then the protocol $\Pi_{enu}$ in $(m, \rho, \Delta, n, mp)$-environment satisfies the consistency property in Theorem \ref{theo:1} with probability that drops exponentially in $T$.
\end{lemma}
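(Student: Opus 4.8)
The plan is to reduce the global consistency claim to the per-chain consistency supplied by the hypothesis (each chain $\mathcal{C}_i$ behaving like the single NC chain, via Lemma~\ref{lamma:1} and Theorem~\ref{theo:1}). The globally confirmed sequence $L$ is built from two ingredients: the per-chain confirmed prefixes, and the total order induced by the logical timestamps $(v,i)$ together with the cutoff $synchronized\_bar$. So the whole argument amounts to showing that both ingredients are \emph{stable} and \emph{monotone} as we pass from an honest party $P_1$ at time $t_1$ to an honest party $P_2$ at time $t_2 \ge t_1$: the logical clock of every block that ever enters $L$ is a fixed number, and the cutoff $synchronized\_bar$ only grows. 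Once these are in hand the prefix relation is immediate, because $L$ is sorted by clock first.

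First I would record the one structural fact coming out of Algorithm~\ref{Algo1}: along any single chain the virtual clock is \emph{strictly increasing}. An accepted block has $v = v_j + 1$, where $v_j$ is its synchronized block's clock, and the validity check (lines 10--14) forces $v_j \ge v_i$ for the parent clock $v_i$, so $v > v_i$. This yields a clean description of the output, $L = \bigcup_{i} \{ \mathcal{B} \in \mathcal{C}_i : \mathcal{B} \text{ per-chain confirmed}, \ v(\mathcal{B}) < synchronized\_bar \}$, ordered by $(v,i)$. Monotonicity of the clock along $\mathcal{C}_i$ shows that the confirmed blocks of $\mathcal{C}_i$ below the cutoff are exactly an initial segment of $\mathcal{C}_i$'s confirmed prefix, so $L$ depends only on the per-chain confirmed prefixes and the clock values.

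The heart of the argument is clock stability. I would show that if $\mathcal{B}$ is per-chain confirmed with $v(\mathcal{B}) < synchronized\_bar$, then its synchronized block $\mathcal{B}_j$ is again per-chain confirmed: $\mathcal{B}_j$ lies in some $\mathcal{C}_j$ with clock $v(\mathcal{B})-1 < synchronized\_bar \le v_j$, the clock of $\mathcal{C}_j$'s last confirmed block, so by monotonicity $\mathcal{B}_j$ sits inside $\mathcal{C}_j$'s confirmed prefix. Iterating, the entire synchronized-block ancestry that Algorithm~\ref{Algo1} consults to compute $v(\mathcal{B})$ consists of confirmed blocks, bottoming out at a genesis block with clock $0$. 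By the per-chain consistency hypothesis these confirmed blocks are identical for $P_1$ and $P_2$, hence $v(\mathcal{B})$ is the same number for both parties. In parallel I would use per-chain consistency together with chain growth to fix the direction of the prefix relation — the confirmed prefix of each $\mathcal{C}_i$ at $(P_2,t_2)$ extends that at $(P_1,t_1)$ since $t_1 \le t_2$ — and deduce that each $v_i$, and therefore $synchronized\_bar$, is non-decreasing, giving $synchronized\_bar^{(1)} \le synchronized\_bar^{(2)}$.

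Combining these, every block of $L_1$ is a confirmed block below $synchronized\_bar^{(1)} \le synchronized\_bar^{(2)}$, hence also belongs to $L_2$ with the same timestamp $(v,i)$; conversely $L_1$ is precisely the set of blocks of $L_2$ whose clock is below $synchronized\_bar^{(1)}$. Because the global order sorts by clock before the chain-index tie-break, all such blocks precede every block with clock $\ge synchronized\_bar^{(1)}$, so they form a genuine prefix of $L_2$, which is the consistency claim of Theorem~\ref{theo:2}. I expect the main obstacle to be exactly the clock-stability step: making rigorous that the recursion in Algorithm~\ref{Algo1} only ever touches confirmed blocks, so that a cross-chain synchronized reference cannot later be reorganized and silently change a confirmed block's timestamp. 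This is where the cutoff $synchronized\_bar$ and the per-chain consistency must be used in tandem, and it is the only place the hypothesis genuinely does work.
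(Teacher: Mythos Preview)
Your overall decomposition matches the paper's proof exactly: per-chain consistency gives that each $E_1(i)$ is a prefix of $E_2(i)$, clock monotonicity along each chain then gives $s_1 \le s_2$ for the two $synchronized\_bar$ values, hence each $G_1(i)$ is a prefix of $G_2(i)$, and since every block of $L_2 \setminus L_1$ has clock at least $s_1$ while every block of $L_1$ has clock below $s_1$, the $(v,i)$-ordering places $L_1$ as a prefix of $L_2$. That is the paper's argument in full.

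Where you depart is the ``clock stability'' step, which you flag as the heart of the proof and the main obstacle. It is neither. The virtual clock $v(\mathcal{B})$ is an \emph{intrinsic} property of $\mathcal{B}$: Algorithm~\ref{Algo1} computes it by recursing along the synchronized-block reference, which is a hash pointer stored in $\mathcal{B}$'s header, and the parent reference, likewise hash-fixed. Any participant that has accepted $\mathcal{B}$ has, by the caching rule in the directed-graph maintenance, already accepted that entire reference chain down to a genesis block and will therefore compute the identical value. So $v(\mathcal{B})$ is a fixed number the moment $\mathcal{B}$ exists; the paper is right to treat it as such without comment, and no confirmation argument is needed to secure it.

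Your proposed justification for clock stability also contains a gap. You argue that the synchronized block $\mathcal{B}_j$ must lie in the confirmed prefix of its own chain $\mathcal{C}_j$, invoking clock monotonicity along $\mathcal{C}_j$'s main chain. But nothing forces $\mathcal{B}_j$ onto the main chain of $\mathcal{C}_j$ at all; the protocol explicitly permits synchronized references to point to stale blocks (Sec.~\ref{subsec:propa}), and a stale block is by definition outside every confirmed prefix, so your monotonicity-along-the-main-chain argument does not apply to it. The gap is harmless only because the whole step is unnecessary. Drop the detour, observe that $v$ is determined by the block's immutable hash-referenced ancestry, and the remainder of your argument is precisely the paper's proof.
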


\begin{proof}
Let the view of participant $P_1$ at time $t_1$ be $view_1$, and the view of participant $P_2$ at time $t_2$ be $view_2$. Let $s_1$ ($s_2$) be the $synchronized\_bar$ in $view_1$ ($view_2$, respectively). Let $L_1$ ($L_2$) be the globally confirmed block sequence for $view_1$ ($view_2$, respectively). 
Let $E_1(i)$ ($0 \leq i \leq m-1$) be the per-chain confirmed block sequence of chain $i$ in $view_1$, and $G_1(i)$ be the prefix of $E_1(i)$ such that $G_1(i)$ contains all blocks in $E_1(i)$ whose logical clock is smaller than $s_1$. In the same way, we have $E_2(i)$ and $G_2(i)$. Before proving $L_1$ is a prefix of $L_2$, we first give the following claim. 

For all $i$ where $0 \leq i \leq m-1$, let $G_1(i)$ is a prefix of $G_2(i)$. According to the \textbf{consistency} property in Theorem \ref{theo:1}, $E_1(i)$ is a prefix of $E_2(i)$. Let block $B_1(i)$ ($B_2(i)$) be the last block in $E_1(i)$ ($E_2(i)$, respectively). In addition, let $v_1(i)$ ($v_2(i)$) denote the virtual logical clock of $B_1(i)$ ($B_1(i)$, respectively). As blocks' clock monotonically increase in the same chain, $v_1(i) \leq v_2(i)$. By ordering rule, we have $s_1 \gets \min_{i = 0}^{m-1}\left \{ v_1(i) \right \}$ and $s_2 \gets \min_{i=0}^{m-1}\left \{ v_2(i) \right \}$. Hence, we have $s_1 \leq s_2$. It is easy to see that $G_1(i)$ is a prefix of $G_2(i)$ with $E_1(i)$ is a prefix of $E_2(i)$ and $s_1 \leq s_2$. In addition, any block in $G_2(i) \setminus G_1(i)$ has clock with no less than $s_1$.

The globally confirmed block sequence $L_1$ in $view_1$ is constructed by that all blocks in $G_1(0)$ through $G_1(m-1)$ are ordered by their logical clock with a tie breaking of chain index. As $G_1(i)$ is a prefix of $G_2(i)$ for all $i$ $(0 \leq i \leq m-1)$, thus all blocks in $L_1$ belong to $L_2$, and are also ordered in the same way. For all block $B \in \bigcup_{i=0}^{m-1}\left \{ G_2(i) \setminus G_1(i) \right \}$, by the previous claim, $B$'s logical clock must be no smaller than $s_1$. Thus these blocks must be ordered after all the blocks in $ \bigcup_{i=0}^{m-1}\left \{  G_1(i) \right \}$ (whose logical clock must be smaller than $s_1$). Thus, It is clear that $L_1$ is the prefix of $L_2$.
\end{proof}

\begin{lemma}\label{lemma:2}
If the three properties in Theorem 2 hold for each of the $m$ chains of protocol $\Pi_{enu}$ in $(m, \rho, \Delta, n, mp)$-environment, then a per-chain confirmed block in any chain will become globally confirmed within at most $\frac{4T}{pn}$ rounds with probability that drops exponentially in $T$.
\end{lemma}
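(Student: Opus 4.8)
The plan is to reduce the global confirmation of a per-chain confirmed block $B$, sitting in some chain $C_j$ with virtual clock $v$ at the round $t_0$ when it becomes per-chain confirmed, to a statement purely about chain growth and the monotonicity of the clock. Recall that $B$ becomes globally confirmed at the first round $t$ with $synchronized\_bar(t)=\min_{i} v_i(t) > v$, i.e. once \emph{every} chain's last per-chain confirmed block carries a clock strictly larger than $v$. So it suffices to bound, for each chain $C_i$ separately, the round by which its per-chain confirmed frontier first exceeds $v$, and then take the maximum over the (constantly many) $m$ chains.

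First I would establish the structural facts that make the virtual clock behave like a global mining sequence number. By the validity check of Algorithm~\ref{Algo1} (lines $10$--$14$), every accepted block has $v_{sync}\ge v_{parent}$, so its clock $v_{sync}+1\ge v_{parent}+1$ is strictly larger than its parent's; hence clocks increase strictly along each chain, and once a chain carries a block of clock $>v$ every later block in it has clock $>v$. Moreover a new block's clock never exceeds (current global maximum)$+1$, so the global maximum rises by at most one per block and cannot skip the value $v$; let $\tau^\ast$ be the first round at which some block attains clock $v$. Since $B$ has clock $v$ and is per-chain confirmed only at $t_0$, we get $\tau^\ast\le t_0$. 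Two consequences follow: (i) any block of clock $>v$ is mined at or after $\tau^\ast$, because its sync-reference already has clock $\ge v$; and (ii) any honest block mined after $\tau^\ast+\Delta$ references the true maximal clock $\ge v$ and therefore receives clock $>v$.

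Next I would convert (i) and (ii) into a per-chain timing bound. Fix $C_i$ and let $h_i$ be the height of its last block of clock $\le v$; by strict monotonicity its per-chain confirmed frontier exceeds $v$ exactly once $C_i$ grows past height $h_i+T$. By (ii) together with the chain-quality property of $C_i$ (the per-chain properties of Theorem~\ref{theo:1}, available for every chain via Lemma~\ref{lamma:1}), any $T$ consecutive blocks contain an honest one, so the adversary can hold $C_i$'s clock at $\le v$ for at most $T$ blocks beyond the length it had at $\tau^\ast+\Delta$; that is, $h_i\le \ell_i(\tau^\ast{+}\Delta)+T$. It therefore suffices for $C_i$ to append about $2T$ blocks after $\tau^\ast$, and chain-growth adds $\ge 2T$ blocks within $\frac{4T}{pn}$ rounds. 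Hence every chain's frontier passes $v$ by round $\tau^\ast+\frac{4T}{pn}\le t_0+\frac{4T}{pn}$ (the $\Delta$ shift and the additive $O(1)$ are absorbed into the slack of chain-growth since $p<\frac{1}{c\Delta n}$), so $B$ is globally confirmed within $\frac{4T}{pn}$ rounds. A union bound over the $m$ chains keeps the failure probability exponentially small in $T$, as each use of chain-growth/chain-quality fails only with that probability.

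The step I expect to be the main obstacle is precisely the adversarial clock suppression handled above: an adversary need not reference the maximal synchronized block, so a chain's clock may lag the true global maximum, and absent chain-quality a single corrupted chain could stall $synchronized\_bar$ indefinitely. Showing that this stalling is confined to $T$ blocks, and then accounting cleanly for the $\Delta$-round propagation of the first clock-$v$ block together with the additive constants so that the total still fits inside $\frac{4T}{pn}$ rather than some larger multiple of $\frac{T}{pn}$, is the delicate part; the monotonicity facts (i)--(ii) and the chain-growth estimate are routine once they are in hand.
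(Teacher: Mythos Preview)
Your argument is correct and follows essentially the same two–phase structure as the paper: use chain-growth to add $T$ blocks to every chain, invoke chain-quality to guarantee one of those $T$ blocks is honest and therefore carries a clock exceeding the target, and then use chain-growth once more so that this honest block acquires $T$ successors and becomes per-chain confirmed; the two phases together give the $\frac{4T}{pn}$ bound.

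The one noteworthy difference is your introduction of $\tau^\ast$, the first round at which some block attains clock $v$. The paper avoids this detour by anchoring directly at $t_0$ and taking as reference value $x=\max_i v_i(t_0)$, the largest clock among the \emph{last} per-chain confirmed blocks at $t_0$. Since the block realising $x$ already has $T$ confirmations at $t_0$, it has certainly propagated to every honest miner by then, so any honest block mined after $t_0$ automatically references a clock $\ge x\ge v$ and gets clock $>v$. This makes your facts (i)--(ii) and the $\tau^\ast+\Delta$ bookkeeping unnecessary: the ``adversarial clock suppression'' obstacle you anticipate is dissolved simply by starting the clock at $t_0$ rather than at $\tau^\ast$. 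Your route is not wrong, just longer; the paper's choice of $x$ instead of $v$ is what buys the simplification.
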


\begin{proof}
Consider any given honest participant $P$ at given time $t_0$ has the view $view$. Let $E_i(t_0)$ ($0 \leq  i \leq m- 1$) be the per-chain confirmed block sequence of chain $i$ in $view$ at time $t_0$. With loss of generality, we assume $t_0$ is after the first $\frac{2T}{pn}$ rounds of the execution, which guarantee there is at least one per-chain confirmed block in $E_i(t_0)$. Let block $B_i(t_0)$ denote the last block in $E_i(t_0)$ at time $t_0$. 

Let $x$ be the largest virtual clock of blocks $\left \{ B_i(t_0) \right \}_{i=0}^{m-1}$. By time $t_1 = t_0 + \frac{2T}{pn}$, each chain of participant $P$'s $view$ must have at least $T$ blocks attached. By the quality property in Theorem \ref{theo:1}, in the $T$ blocks, there must have one block of honest participants which synchronizes its belonging chain to the value $x$. Then from time $t_1$ to time $t_2 = t_1 + \frac{2T}{pn}$, the $T$ blocks generated from $t_0$ to $t_1$ will become partially confirmed. At time $t_2$, the $synchronized\_bar$ must be no less than $x$. Then all per-chain confirmed blocks at time $t_0$ will become globally confirmed. 
\end{proof}

\begin{lemma} \label{lemma:3}
If the three properties in Theorem 2 hold for each of the $m$ chains of protocol $\Pi_{enu}$ in $(m, \rho, \Delta, n, mp)$-environment, then the protocol $\Pi_{enu}$ in $(m, \rho, \Delta, n, mp)$-environment satisfies the L-quality and L-growth properties in Theorem \ref{theo:1} with probability that drops exponentially in $T$.
\end{lemma}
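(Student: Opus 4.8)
The plan is to derive the two liveness properties of the globally confirmed sequence $L$ from the per-chain \textbf{chain-growth} and \textbf{chain-quality} guarantees of Theorem~\ref{theo:1} (which hold for every chain by Lemma~\ref{lamma:1}), together with the per-chain-to-global conversion bound of Lemma~\ref{lemma:2}. Throughout I assume the execution has passed an initial warm-up of $\frac{2T}{pn}$ rounds so that every chain already carries at least $T$ blocks; then for each chain the number of per-chain confirmed blocks equals its main-chain length minus $T$. Since each chain's properties fail only with probability exponential in $T$ and $m$ is a fixed constant, a union bound over the $m$ chains keeps the total failure probability exponentially small in $T$, as required.

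For \textbf{L-growth}, I fix a start time $t_0$ and set $t' = t_0 + \gamma\frac{2T}{pn}$ and $t_{end} = t_0 + (\gamma+2)\frac{2T}{pn}$. By chain-growth each of the $m$ main chains gains at least $\gamma T$ blocks between $t_0$ and $t'$, so the total number of per-chain confirmed blocks, call it $P_c$, satisfies $P_c(t') \geq P_c(t_0) + m\gamma T$. Every block that is per-chain confirmed at $t'$ becomes globally confirmed by $t' + \frac{4T}{pn} = t_{end}$ by Lemma~\ref{lemma:2}, and once it enters $L$ it remains (consistency, Lemma~\ref{lemma:1}); hence $|L(t_{end})| \geq P_c(t')$. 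On the other hand $L(t_0)$ is a subset of the per-chain confirmed blocks at $t_0$, so $|L(t_0)| \leq P_c(t_0)$. Combining, $|L(t_{end})| - |L(t_0)| \geq P_c(t') - P_c(t_0) \geq m\gamma T$, which is exactly the claimed growth over $(\gamma+2)\frac{2T}{pn}$ rounds.

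For \textbf{L-quality}, the key structural observation is that $L$ orders all per-chain confirmed blocks with clock below $synchronized\_bar$ by the pair $(v,i)$, and that along any single chain the virtual clock is strictly increasing. Consequently, for any window $W$ of $m\gamma T$ consecutive blocks of $L$, the blocks $W_i := W \cap \mathcal{C}_i$ contributed by chain $i$ form a \emph{contiguous} run of $k_i$ blocks of that chain, with $\sum_{i=0}^{m-1} k_i = m\gamma T$. I then apply chain-quality to each run $W_i$: among any $T$ consecutive blocks of $\mathcal{C}_i$ at most $\frac{\rho}{1-\rho}T$ are adversarial, so the adversarial count inside $W_i$ is at most $\frac{\rho}{1-\rho}k_i$. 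Summing over the $m$ chains, the adversarial blocks in $W$ number at most $\frac{\rho}{1-\rho}\sum_i k_i = \frac{\rho}{1-\rho}\,m\gamma T$, leaving at least $m\gamma T\bigl(1-\frac{\rho}{1-\rho}\bigr) = m\gamma\bigl(1-\frac{\rho}{1-\rho}\bigr)T$ honest blocks, as claimed.

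The main obstacle is the boundary bookkeeping in the L-quality step: chain-quality as stated in Theorem~\ref{theo:1} bounds the adversarial fraction only over windows of \emph{exactly} $T$ consecutive blocks, whereas the runs $W_i$ have arbitrary length $k_i$ that need not be a multiple of $T$, and some chains may even contribute $k_i < T$ blocks. To make the per-run bound $\frac{\rho}{1-\rho}k_i$ rigorous I would first upgrade chain-quality to the standard form ``any $\ell \geq T$ consecutive blocks contain at least $(1-\frac{\rho}{1-\rho})\ell$ honest blocks'' (which follows from the same typical-execution argument underlying Theorem~\ref{theo:1}), and then absorb the $O(mT)$ slack coming from short runs and unaligned window endpoints into lower-order terms; this slack is negligible against $m\gamma T$ and does not affect the exponential-in-$T$ failure bound. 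A secondary point to verify is that the warm-up assumption and the identity (per-chain confirmed count) $=$ (chain length) $-\,T$ hold uniformly across all $m$ chains, which is precisely where the union bound is invoked.
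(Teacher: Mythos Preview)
Your proposal is correct, and for \textbf{L-growth} it is essentially identical to the paper's own argument: let each of the $m$ chains accrue $\gamma T$ newly per-chain-confirmed blocks over $\gamma\frac{2T}{pn}$ rounds by chain-growth, and then invoke Lemma~\ref{lemma:2} to push all of them into $L$ within a further $\frac{4T}{pn}$ rounds, giving $m\gamma T$ new blocks in $L$ over $(\gamma+2)\frac{2T}{pn}$ rounds.

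Where you differ is on \textbf{L-quality}. The paper's proof of this lemma is extremely terse and, in effect, only treats the growth part explicitly; quality is left implicit, presumably by noting that among the batch of $m\gamma T$ newly-confirmed blocks just constructed, chain-quality guarantees $\gamma(1-\frac{\rho}{1-\rho})T$ honest blocks per chain and then summing. Your argument instead addresses the statement of L-quality as actually written in Theorem~\ref{theo:2}, namely for an \emph{arbitrary} window of $m\gamma T$ consecutive blocks in $L$: you decompose the window into $m$ contiguous per-chain runs (using the monotonicity of the virtual clock along each chain), apply chain-quality to each run, and sum. This is a genuinely more complete argument than what the paper writes down, and your identification of the boundary issue---that chain-quality in Theorem~\ref{theo:1} is stated only for windows of exactly $T$ blocks, whereas the runs $W_i$ have arbitrary lengths---is a real gap that the paper does not address either. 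Your proposed fix (upgrade to the standard ``any $\ell\geq T$ consecutive blocks'' form and absorb $O(mT)$ slack from short runs) is the right way to close it.
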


\begin{proof}
Consider any given honest participant $P$ at given time $t_0$ has the view $view$. Let $E_i(t_0)$ be the per-chain confirmed block sequence of chain $i$ in $view$ at time $t_0$. For ant integer $\gamma \geq 1$, each chain of the protocol $\Pi_{enu}$ in $(m, \rho, \Delta, n, mp)$-environment has at least $\gamma \cdot T$ blocks becoming newly partially-confirmed from time $t_0$ to time $t_1 = t_0 +  \gamma \frac{2T}{pn}$. By Lemma \ref{lemma:3}, we know after $\frac{4T}{pn}$ rounds, any of the per-chain confirmed blocks will become global confirmed. The proof is done. 
\end{proof}

\section{Discussion} \label{sec:discuss}
\subsection{Block Propagation and Stale Blocks} \label{subsec:propa}
\noindent Recall Algorithm \ref{Algo1}, 
the synchronized block must be cached in participants' local view before computing the referring block's clock, which leads to two new issues. 
First, the referred synchronized block may not be included in the main chain, which we refer to as a stale block. In Bitcoin, stale blocks are eventually abandoned by nodes. For a new node joining the network, It is hard for the node to invoke Algorithm \ref{Algo1}.   
Thus, nodes can store synchronized blocks' header and the Merkle tree proof of its included hash references, which is similar to uncle blocks in Ethereum\cite{buterin2014next}. 

Second, if the synchronized block belongs to a different chain with the referring block, that means nodes' verification and acceptance of this new arriving block will be dependent on an existing cross-chain block. In other words, each chain's block generation cannot be viewed as independent with each other anymore, which deviates from the design goal and makes the analysis complicated. However, It is not a tough issue. First, most of the nodes will receive the synchronized block before the arriving of the next block mined on it. Second, in Eunomia nodes will relay the synchronized block with newly mined blocks if their neighbors do not have the synchronized block. Thus, each chain's block generation process remains independent. 


\section{The Related Work} \label{sec:related}
\noindent In general, there are three main approaches in the literature to improve the scalability of PoW-based permissionless blockchains. The first is based on the extension of NC. The second is to leverage a more generalized graph (\emph{e.g.}, directed acyclic graph (DAG) and parallel chains) instead of the chain structure. The third is to adopt the sharding protocol to scale out. \\

\noindent \textbf{Extended NC.} The Bitcoin community in \cite{BitcoinUnlimited} proposed Bitcoin Unlimited (BU), which breaks NC's fixed block size limit and allows miners to decide the limit value collectively. However, Zhang and Preneel in \cite{Zhang2017} showed that the absence of block validity consensus (BVC) could lead to new attack vectors.
In \cite{bitcoinng}, Eyal \etal proposed Bitcoin-NG, a protocol which decouples the transaction processing from the blockchain maintenance, and significantly promotes the throughput. In Bitcoin-NG, the key blocks' owners are entitled to create many micro-blocks consisting of transactions in one epoch. However, as only a single leader is responsible for generating all micro-blocks, It is easy for a leader to launch the censorship attack, and meanwhile to suffer from DoS attacks due to the revealed IP address. In Eunomia, there are multiple concurrent leaders in charge of transaction processing, and leaders' IP addressee cannot be foreseen by the adversary until a block is mined. 

To avoid the single leader's dilemma, some hybrid consensus protocols \cite{omniledger, ByzCoin, Luu2016, rapidZamani2018, pass20171} are proposed, which combines NC with the classical byzantine agreement (BA). Specifically, NC is used to establish committees in a permissionless way, and BA is adopted to reach an agreement of transactions. The hybrid consensus protocols can provide better throughput and instant finality. However, these blockchain protocols usually assume the adversaries control less than $1/3$ of the votes or computation power. Additionally, with the increase of the committee members, the $O(n^2)$ communication complexity for reaching an agreement will become intolerable \cite{pbft1999}, and the committee reconfiguration will consume a lot of time \cite{Luu2016, omniledger,rapidZamani2018}. \\


\noindent \textbf{Generalized graph structure.} The second class of protocols is replacing the NC'S underlying chain structure with some generalized graph. The protocols such as Ghost \cite{Sompolinsky2015ghost}, Phantom \cite{sompolinsky2018phantom}, Spectre\cite{Sompolinsky2016SPECTREAF}, and Conflux \cite{li2018scaling} construct the blocks in a directed acyclic graph (DAG) by allowing a single block to reference multiple previous blocks. Particularly, Ghost adopts the heaviest subtree rule instead of LCR to choose one main chain. Phantom, Spectre and Conflux use the DAG to define the global order of all blocks. Although these protocols can leverage the forked blocks and promote the throughput, the complicated graph makes it hard to provide formal proof of their safety, liveness and throughout. 

Chainweb \cite{martino2018chainweb, Quaintance2018chainweb} is the among the first to maintain a set of parallel chains, in which each chain cross-references all other chains according to the base graph. However, Chainweb \cite{martino2018chainweb, Quaintance2018chainweb} 
does not provide a formal analysis for its claimed safety and better throughput performance. Based on the work, Kiffer \etal \cite{Kiffer2018} analyzed a special case of Chainweb's configuration, in which each chain cross-references all other chains, and showed that Chainweb has the same throughput with the single chain of NC under the same security guarantee. Attracted by the parallel chain's appealing characteristics, there are three recent concurrent work \cite{bagaria2018deconstructing, fitzi2018parallel, yu2018ohie} to propose better protocols. In \cite{fitzi2018parallel, bagaria2018deconstructing}, one of $m$ chains is designed to provide ``clock'' to synchronize all the blocks. However, the single special chain makes the whole system vulnerable to attacks or is the bottleneck for performance. In \cite{yu2018ohie}, Yu \etal utilized additional attachment blocks, which is inefficient and incompatible. Particularly, the protocols  \cite{bagaria2018deconstructing, yu2018ohie} are analyzed in a synchronous model. Sec.~\ref{sec:intro} has discussed these three works in detail. \\

\noindent \textbf{Sharding-based protocol.} In \cite{monoxide}, Jiaping Wang and Hao Wang propose a protocol called Monoxide, which composes multiple independent single chain consensus systems, called zones. They also proposed eventual atomicity to ensure transaction atomicity across zones and Chu-ko-nu mining to ensure the effective mining power in each zone. Monoxide is shown to provide $1,000x$ throughput and $2,000x$ capacity over Bitcoin and Ethereum. Except for this work, there are some committee-based sharding protocols \cite{rapidZamani2018,omniledger,Miller2016}. Each shard is assigned a subset of the nodes, and nodes run the classical byzantine agreement (BA) to reach an agreement. However, these protocols only can tolerant up to $1/3$ adversaries. What's more, all sharding-based protocols have additional overhead and latency for cross-shard transactions. 






\section{Conclusion} \label{sec:conclusion}
In this technical report, we design Eunomia, a permissionless parallel chain protocol that uses the virtual logical clock to realize the global ordering of blocks. In addition, Eunomia adopts a fin grained UTXO sharding, which does well solve the conflicting transaction issue and is shown to be SPV-friendly. Eunomia is proved to provide the end-to-end safety and liveness guarantees with less than $1/2$ adversarial computation power.

\bibliographystyle{IEEEtran}
\bibliography{IEEEabrv,reference}

\end{document}